
\documentclass[letterpaper, 10 pt, conference]{ieeeconf}  

\IEEEoverridecommandlockouts                              

\overrideIEEEmargins                                      





\title{\LARGE \bf
Signal Temporal Logic Task Decomposition via Convex Optimization*
}

\author{Maria Charitidou and Dimos V. Dimarogonas$^{1}$
\thanks{*This work was partially supported by the Wallenberg AI, Autonomous Systems and Software Program (WASP) funded by the Knut and Alice Wallenberg Foundation, the ERC CoG LEAFHOUND and the Swedish Research Council.}
\thanks{$^{1}$Both authors are with the Division of Decision and Control Systems, Royal Institute of Technology, 100 44 Stockholm, Sweden
        {\tt\small mariacha@kth.se} (M. Charitidou), {\tt\small dimos@kth.se} (D. V. Dimarogonas)}%
}
\usepackage{color}
\usepackage{comment}
\usepackage[backend=biber,style=ieee,sorting=none]{biblatex}
\addbibresource{mybib.bib}

\usepackage{amssymb}
\usepackage{amsmath}

\usepackage{amsthm}
\usepackage{bm}
\allowdisplaybreaks
\theoremstyle{plain}
\newtheorem{theorem}{Theorem}
\newtheorem{remark}{Remark}
\newtheorem{assumption}{Assumption}
\newtheorem{proposition}{Proposition}
\newtheorem{problem}{Problem}

\newcommand{\xbf}{\mathbf{x}}
\newcommand{\ubf}{\mathbf{u}}
\newcommand{\zbf}{\mathbf{z}}
\newcommand{\nrob}{R}
\newcommand{\xset}{\mathbb{X}}
\newcommand{\uset}{\mathbb{U}}
\newcommand{\zset}{\mathcal{Z}}
\newcommand{\alw}{\mathcal{G}}
\newcommand{\ev}{\mathcal{F}}
\newcommand{\until}{\mathcal{U}}

\usepackage{graphicx} 
\usepackage{subcaption} 

\begin{document}

\maketitle
\thispagestyle{empty}
\pagestyle{empty}

\begin{abstract}
In this paper we focus on the problem of decomposing a global Signal Temporal Logic formula (STL) assigned to a multi-agent system to local STL tasks when the team of agents is a-priori decomposed to disjoint sub-teams. The predicate functions associated to the local tasks are parameterized as hypercubes depending on the states of the agents in a given sub-team. The parameters of the functions are, then, found as part of the solution of a convex program that aims implicitly at maximizing the volume of the zero level-set of the corresponding predicate function. Two alternative definitions of the local STL tasks are proposed and the satisfaction of the global STL formula is proven when the conjunction of the local STL tasks is satisfied. 
\end{abstract}

\section{Introduction}
Over the last decades, multi-agent systems have been considered in a variety of applications such as connectivity and formation control \cite{magnus}
or coverage \cite{coverage}. The complexity of  these applications has motivated the need of an expressive language, capable of describing complex task specifications for planning and control synthesis. 

Recently, extensive interest has been shown in planning under high-level task specifications expressed by Linear Temporal Logic (LTL) \cite{tl1,tl2}. In these methods the temporal formula, the environment and the agent dynamics are abstracted into  finite-transition systems. Then, graph-based methods are employed to find a discrete path satisfying the LTL specifications which is finally followed using continuous control laws. An important limitation of the aforementioned methods is the increasing computational complexity as the number of  the agents in the team becomes larger. Towards minimizing the computational costs, large effort has been devoted to the decomposition of a global LTL formula into local LTL tasks whose satisfaction depends on subsets of agents. Existing methods, applied to heterogeneous agents \cite{dimos2,ltl_services}, most often employ exhausting automata-based  approaches \cite{dimos1,dimos2,ltl_services} or more recently, cross-entropy optimization methods limited, though, to homogeneous agents \cite{ltldec_entropy}. 

All methods presented so far consider the satisfaction of LTL tasks without explicit time constraints. On the other hand, Signal Temporal Logic (STL) \cite{stl} can express complex tasks under strict deadlines. An advantage of STL over LTL, is the robust semantics \cite{stlrob, fainekos} it offers that allow the evaluation of the satisfaction of the task over a continuous-time signal, rendering the abstractions of the agents' dynamics obsolete. 

Existing methods for planning under STL  specifications consider a global STL formula and find plans as solutions to computationally prohibitive MILPs  \cite{mpc_raman,mpc_sadra} or to scalable convex programs \cite{lars2,kunal}. Other approaches compose \cite{decentralized} or assume the existence \cite{lars_linear} of local STL tasks whose satisfaction involves only a small subset of agents. This facilitates the design of decentralized frameworks that are inherently more robust to agents' failures and often cheaper in terms of communication. Towards decentralized control under global task specifications,  a satisfiability modulo theories (SMT) approach has been proposed in \cite{stldec2} for tasks described in caSTL, in which both the global formula and the team of agents is decomposed. Here, the decomposition is based on a set of services required for each task and a set of utility functions specifying the capabilities of the agents. Nevertheless, the decomposition of a global STL formula in continuous space and time remains an open problem.

In this paper we propose a novel framework for the decomposition of a global STL formula imposed on a multi-agent system into a set of local tasks when the team of agents is a-priori divided into disjoint sub-teams. The goal of the decomposition is to make the satisfaction of every local task dependent only to a subset of agents that belong to the same sub-team. Initially, the predicate functions corresponding to STL formulas forming the local tasks are parameterized as functions of the infinity norm of the agents' states while their parameters are found as part of the solution to a convex program that aims at maximizing the volume of their zero level-set. Although the choice of the parametric family of the predicate functions is not restrictive, our current choice allows us to draw conclusions on the volume of a continuous state-space set by incorporating a finite, but possibly large, number of constraints in the convex program. The number of these constraints differs per global STL task but depends solely on the number of the agents' states involved in its satisfaction. Two definitions of the local tasks that differ on the definition of the STL tasks originating from eventually formulas are introduced. Finally, for both definitions the satisfaction of the global STL formula is proven when the conjunction of the local tasks is satisfied.

The remainder of the paper is as follows: Section II includes the preliminaries and problem formulation. Section III introduces the proposed method for STL decomposition. Simulations are shown in Section IV and conclusions are summarized in Section V.

\section{Preliminaries and Problem Formulation}
The set of real and non-negative real numbers are denoted by $\mathbb{R}$ and $\mathbb{R}_{\geq 0}$ respectively. True and false are denoted by $\top, \bot$ respectively. Scalars and vectors are denoted by non-bold and bold letters respectively. The infinity norm of a vector $\xbf \in \mathbb{R}^n$ is defined as $\Vert \xbf \Vert_{\infty}=\max_i\vert \xbf_i \vert$, where $\xbf=\begin{bmatrix} \xbf_1 & \ldots & \xbf_n \end{bmatrix}^T$.
Given a finite set $ \mathcal{V}$, $\prod_{k\in \mathcal{V}}\xset_k$ denotes the Cartesian product of the sets $\xset_k, k\in  \mathcal{V}$. Given a rectangular matrix $A\in M_{n\times m}(\mathbb{R})$ we define the set $A\xset$ as $A\xset=\{A\xbf: \xbf \in \xset\}$. A square matrix $P\in M_n(\{0,1\})$ is called a \textit{permutation matrix} \cite[Ch. 0.9.5]{horn} if exactly
one entry in each row and column is equal to 1 and all other entries are 0. Consider the vectors $\xbf \in \mathbb{R}^n, \mathbf{y}\in \mathbb{R}^m$ with $n\leq m$ satisfying $\xbf=B\mathbf{y}$. The matrix $B=[b_{ij}]$ is called a \textit{selection matrix} if it has the following properties: 1) $b_{ij}\in \{0,1\}$, 2) $\sum_{j=1}^m b_{ij}=1, \forall i=1,\ldots,n$ and 3) $\sum_{i=1}^n b_{ij}=1, \forall j=1,\ldots,m$.

\subsection{Signal Temporal Logic (STL)}
Signal Temporal Logic (STL) determines whether a predicate $\mu$ is true or false. The validity of each predicate $\mu$ is evaluated based on a continuously differentiable function $h:\mathbb{R}^n \rightarrow \mathbb{R}$ as follows:
\begin{equation*}
    \mu=\begin{cases} \top, &h(\xbf) \geq 0 \\ \bot, & h(\xbf)< 0  \end{cases}
\end{equation*}
for $\xbf \in \mathbb{R}^n$. The basic STL formulas are given by the grammar:
$$ \phi:= \top \; | \;  \psi \;| \;\neg \phi \; | \; \phi_1 \land  \phi_2 \; |\; \alw_{[a,b]} \phi \;| \; \ev_{[a,b]} \phi \;| \; \phi_1 \; \until_{[a,b]} \;  \phi_2  $$
where $\phi_1, \phi_2  $ are STL formulas and $\alw_{[a,b]},\; \ev_{[a,b]}, \; \until_{[a,b]}$ is the always, eventually and until operator defined over the interval  $[a,b]$ with $0 \leq a \leq b$. Let $ \xbf \models \phi$ denote the satisfaction of the formula $\phi$ by a signal $\xbf:\mathbb{R}_{\geq 0} \rightarrow \mathbb{R}^n$. The formula $\phi$ is satisfiable if $\exists \; \xbf:\mathbb{R}_{\geq 0} \rightarrow \mathbb{R}^n$ such that $\xbf \models \phi$. The STL semantics for a signal $\xbf:\mathbb{R}_{\geq 0} \rightarrow \mathbb{R}^n$ are recursively given and can be found, e.g., in \cite{lars2}. STL is equipped with robustness metrics determining how robustly an STL formula $\phi$ is satisfied at time $t$ by a signal $\xbf$. These semantics are defined as follows \cite{stlrob, fainekos}: $\rho^{\mu}(\xbf,t)=h(\xbf(t))$, $\rho^{\neg \phi}(\xbf,t)=-\rho^{\phi}(\xbf,t)$, $ \rho^{\phi_1 \wedge \phi_2}(\xbf,t)=\min(\rho^{\phi_1}(\xbf,t),\rho^{\phi_2}(\xbf,t))$, $\rho^{\phi_1 \; \until_{[a,b]} \;  \phi_2}(\xbf,t)=\max_{t_1 \in [t+a,t+b]} \min(\rho^{\phi_2}(\xbf,t_1), \min_{t_2\in [t,t_1]} \rho^{\phi_1}(\xbf,t_2)) $, $\rho^{\ev_{[a,b]} \phi}(\xbf,t)=\max_{t_1\in [t+a,t+b]} \rho^{\phi}(\xbf,t_1) $, $\rho^{\alw_{[a,b]} \phi}(\xbf,t)=\min_{t_1\in [t+a,t+b]} \rho^{\phi}(\xbf,t_1)$. Finally, it should be noted that $\xbf \models \phi$ if $\rho^{\phi}(\xbf,0)>0$.

\subsection{Problem Formulation}

In this work we consider the following STL  fragment:
\begin{subequations}
\begin{align}
    \psi &:= \;  \mu \;| \;\neg \mu  \label{eq:f1} \\
    \varphi &:= \alw_{[a,b]} \psi \;| \; \ev_{[a,b]} \psi  \label{eq:f2}\\
    \phi&:=\bigwedge_{i=1}^{p} \varphi_i \label{eq:f3}
\end{align}
\end{subequations}
where $0\leq a \leq b < \infty$ and $p\geq 1$.
\begin{remark}
The STL fragment defined by \eqref{eq:f1}-\eqref{eq:f3} is expressive enough to accommodate until STL formulas of the form $\varphi=\psi_1 \until_{[a,b]} \psi_2$ where $\psi_i, i=1,2$ are defined by \eqref{eq:f1}. By definition, for any $t^* \in [a,b]$ the until formula $\varphi=\psi_1 \until_{[a,b]} \psi_2$ can be written as $\varphi=\alw_{[a,t^*]} \psi_1 \wedge \ev_{[t^*,t^*]} \psi_2$. Hence, if for a given time instant $t^* \in [a,b]$ there exists a signal $\xbf:\mathbb{R}_{\geq 0} \rightarrow \mathbb{R}^n$ such that $\xbf \models \big(\alw_{[a,t^*]} \psi_1 \wedge \ev_{[t^*,t^*]} \psi_2\big)$ then $\xbf \models \varphi$.
\end{remark}

Consider a team of $\nrob$ agents with each agent identified by its index $k \in \mathcal{V}=\{1, \ldots, \nrob\}$. For every agent  $k$ let $ \mathbf{x}_k \in \xset_k$ denote its state vector, where $\xset_k\subseteq \mathbb{R}^{\bar{n}_k} $ is a known, bounded, convex set for every $k\in \mathcal{V}$. Let $n=\sum_{k\in \mathcal{V}} \bar{n}_k$ and $\xbf=\begin{bmatrix} \xbf_1^T & \ldots & \xbf_R^T \end{bmatrix}^T \in \xset$ where $\xset=\prod_{k\in \mathcal{V}} \xset_k$ is convex as the Cartesian product of convex sets. Assume that the agents are decomposed in $v$ smaller teams $\{\mathcal{V}_1, \ldots, \mathcal{V}_v\}$, $\mathcal{V}_l \subseteq \mathcal{V}, l=1,\ldots,v$ that are disjoint, i.e., for any $l_1, l_2\in \{1,\ldots, v\}$ with $l_1\neq l_2$ it holds that $\mathcal{V}_{l_1}\cap \mathcal{V}_{l_2}=\emptyset$ and satisfy $\bigcup_{l=1}^v \mathcal{V}_l=\mathcal{V}$.

Consider a global STL formula $\phi$ of the form \eqref{eq:f3} with $\mathcal{I}= \{1,\ldots, p\}$ and sub-formulas $\varphi_i, \; i\in \mathcal{I}$ satisfying \eqref{eq:f1}-\eqref{eq:f2}. Let $[a_i,b_i]$ be the interval of satisfaction associated with the temporal operator of $\varphi_i, i\in \mathcal{I}$ and define the sets of always and eventually formulas of $\phi$ as $\mathcal{I}_{\alw}=\big\{i\in \mathcal{I}: \varphi_i=\alw_{[a_i,b_i]} \psi_i\big\}$ and $\mathcal{I}_{\ev}=\big\{i\in \mathcal{I}: \varphi_i=\ev_{[a_i,b_i]} \psi_i\big\}$ respectively. Observe that by definition of the STL fragment in \eqref{eq:f1}-\eqref{eq:f3} it holds that $\mathcal{I}=\mathcal{I}_{\alw} \cup \mathcal{I}_{\ev}$. Assume without loss of generality that the satisfaction of each $\varphi_i, i\in \mathcal{I}$ depends on multiple agents of different teams $\mathcal{V}_l$ and let $V_i\subseteq \{1,\ldots,v\}, i\in \mathcal{I}$ denote the set of indices of the agents' groups that have at least one member contributing to the satisfaction of $\varphi_i$. Since $\phi$ is a global task its satisfaction requires agents to be fully aware of the actions of their peers. However, in real-time scenarios communication between all agents may often be hard to establish, especially when the working environment of the agents is large. Addressing this problem, in this paper we propose decomposing the initial task $\phi$ into local tasks the satisfaction of which depends only on the agents in the same team $\mathcal{V}_l$. This problem is formally introduced as:

\begin{problem}
Given a global STL formula $\phi$ defined by \eqref{eq:f3} and the disjoint sets of agents $\mathcal{V}_l, l=1,\ldots,v$ satisfying $\bigcup_{l=1}^v \mathcal{V}_l=\mathcal{V}$ find  STL formulas $\phi_1, \ldots, \phi_v$ such that: 1) each STL formula $\phi_l$ depends on the agents in $\mathcal{V}_l$ and 2) $\xbf \models \big(\phi_1\wedge \ldots\wedge \phi_v\big) \Rightarrow \xbf \models \phi$ if such  $\xbf:\mathbb{R}_{\geq 0}\rightarrow \xset$ exists.
\end{problem}

\section{Decomposition of STL Formulas}

In this Section we design a number of STL tasks the satisfaction of which depends on a known subset of agents. Consider the formula $\phi$ defined by \eqref{eq:f3}. Let the predicate function $h_i:\xset \rightarrow \mathbb{R}$ associated with the formula $\varphi_i,i\in \mathcal{I}$. Then, the zero level-set of $h_i(\xbf)$ is defined as follows:
\begin{equation}
    \mathcal{S}_i=\{\xbf \in \xset: h_i(\xbf)\geq 0\} \label{eq:levelset}
\end{equation}
Here, we assume that $h_i(\xbf),i\in \mathcal{I}$ is a function whose value may depend on the states of all agents in $\mathcal{V}$. As a result guaranteeing the satisfaction of $\varphi_i, i\in \mathcal{I}$ may require the knowledge of all agents' actions and thus global communication. In real-time scenarios communication among all agents can become costly or hard due to packet losses or communications delays. On the other hand, decentralized approaches allow agents to communicate with a subset of their peers and optimize their actions with respect to a limited number of agents thus improving the computational complexity of the problem. 

In the context of STL control synthesis a decentralized approach involves the requirement of assigning to agents tasks whose satisfaction depends only to a subset of agents with established communication links , i.e., to $\mathcal{V}_l, l=1,\ldots,v$ while guaranteeing the satisfaction of the global task $\phi$. To that end, in this paper we propose  a set of STL tasks $\phi_l=\bigwedge_{q_i=1}^{p_l} \bar{\varphi}_{q_i}^l,\; l=1,\ldots,v$ whose satisfaction depends on the corresponding set of agents $\mathcal{V}_l$. Here, $\bar{\varphi}_{q_i}^l$ denotes the $q_i^l$-th formula of $\phi_l$ that is considered to be the result of the decomposition of the sub-formula $\varphi_i$ of \eqref{eq:f3}. If it is clear from context, we may omit the subscript of the index $q_i\in \{1,\ldots,p_l\}$. 

Let $\zbf_l \in \zset_l \subset \mathbb{R}^{n_l}$ be the states of the agents in $\mathcal{V}_l$ where $n_l=\sum_{k\in \mathcal{V}_l} \bar{n}_k$ and  $\zset_l=\prod_{k\in \mathcal{V}_l} \xset_k$. The vector $\zbf_l, l=1,\ldots,v$ can be obtained from $\xbf$ using the following equation:
\begin{equation}
    \zbf_l=E_l \xbf \label{eq:z2x}
\end{equation}
where $E_l \in M_{n_l\times n}(\{0,1\})$ is a selection matrix. Additionally, the vector $\xbf$ can be written with respect to the vectors $\zbf_l, l=1,\ldots,v$ as:
\begin{equation}
    \xbf=A \zbf \label{eq:permutation}
\end{equation}
where $\zbf=\begin{bmatrix} \zbf_1^T & \ldots & \zbf_v^T \end{bmatrix}^T$ and $A\in M_n(\{0,1\})$ is an appropriately chosen permutation matrix. Let $[a_{q}^l,b_{q}^l]$ and $h_{q}^l:\zset_l\rightarrow \mathbb{R}, q=1,\ldots,p_l, \; l=1,\ldots v$ denote the interval of satisfaction and predicate function corresponding to $\bar{\varphi}_{q}^l$ respectively. Here,  for every $l=1,\ldots,v$ we assume that $h_{q_i}^l(\zbf_l)=h_{q_i}^l(\zbf_l;\bm{\theta}_i^l), q_i=1,\ldots,p_l$ belongs to a known family of functions and its value depends on a set of parameters $\bm{\theta}_i^l\in \Theta_i^l \subseteq \mathbb{R}^{m_i^l}$ to be tuned towards maximizing the volume of the zero level-set  of $h_{q_i}^l(\zbf_l)$ defined as:
\begin{equation}
    S_{q_i}^l=\big\{\zbf_l \in \zset_l: h_{q_i}^l(\zbf_l)\geq 0 \big\} \label{eq:set}
\end{equation}

Based on the above we propose the following method for designing $\phi_l,l=1,\ldots,v$:
\begin{theorem}
Consider the global STL formula $\phi$ defined by \eqref{eq:f1}-\eqref{eq:f3} and the predicate function $h_i(\xbf)$ associated to $\varphi_i, i\in \mathcal{I}$. Assume that $\mathcal{S}_i\neq \emptyset$, where $\mathcal{S}_i, i\in \mathcal{I}$ is defined in \eqref{eq:levelset}. For every $i\in \mathcal{I}$ derive the functions $h_{q_i}^l(\zbf_l)$ as solutions to the following optimization problem:
\begin{subequations}\label{eq:dec}
\begin{align}
    \max_{\bm{\theta}_i^l\in \Theta_i^l, l\in V_i} \sum_{l\in V_i}\textit{vol}(S_{q_i}^l) \tag{\ref{eq:dec}}
    \end{align}
subject to:
\begin{align}
    \zbf_l&\in S_{q_i}^l, \quad l\in V_i\\
    \xbf&\in \mathcal{S}_i \label{eq:basiceq}\\
    \zbf_l&=E_l \xbf, \quad l\in V_i
\end{align}
\end{subequations}
where $\textit{vol}(S_{q_i}^l)$ denotes the volume of the set $S_{q_i}^l$ defined in \eqref{eq:set}. For every $l=1,\ldots,v$ define the formulas $\bar{\varphi}_{q_i}^l$ as follows:
\begin{equation}
    \bar{\varphi}_{q_i}^l=\begin{cases} \ev_{[a_{q_i}^l,b_{q_i}^l]} \bar{\mu}_{q_i}^l, \quad i \in \mathcal{I}_{\ev}\\\alw_{[a_{q_i}^l,b_{q_i}^l]} \bar{\mu}_{q_i}^l, \quad i\in \mathcal{I}_{\alw}
    \end{cases} \label{eq:newformula}
\end{equation}
with 
\begin{subequations}
\begin{align}
    [a_{q_i}^l,b_{q_i}^l]&=\begin{cases}[t_i,t_i], \quad i \in \mathcal{I}_{\ev}\\ [a_i,b_i],\quad i\in \mathcal{I}_{\alw} \end{cases} \label{eq:interval}\\
    \bar{\mu}_{q_i}^l&=\begin{cases} \top, & h_{q_i}^l(\zbf_l)\geq 0 \\ \bot, & h_{q_i}^l(\zbf_l)< 0  \end{cases}  \label{eq:predicate}
\end{align}
\end{subequations}
where $\mathcal{I}=\mathcal{I}_{\alw} \cup \mathcal{I}_{\ev}$, $t_i\in [a_i,b_i]$ and $[a_i,b_i]$ is the interval of satisfaction associated with each $\varphi_i$ of the global formula $\phi$. Let $\phi_l=\bigwedge_{q_i=1}^{p_l} \bar{\varphi}_{q_i}^l$, $l=1,\ldots,v$. If there exists $\xbf:\mathbb{R}_{\geq 0} \rightarrow \xset$ such that $\rho^{\phi_1\wedge \ldots\wedge \phi_v}(\xbf,0)>0$, then $\rho^\phi(\xbf,0)>0$.
\end{theorem}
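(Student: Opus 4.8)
The plan is to work directly with the robust semantics and reduce the claim to a statement about each global sub-formula $\varphi_i$ separately. Since the conjunction semantics give $\rho^{\phi_1 \wedge \ldots \wedge \phi_v}(\xbf,0) = \min_{l}\min_{q_i} \rho^{\bar{\varphi}_{q_i}^l}(\xbf,0)$ and $\rho^{\phi}(\xbf,0) = \min_{i \in \mathcal{I}} \rho^{\varphi_i}(\xbf,0)$, it suffices to show that positivity of all local robustness values $\rho^{\bar{\varphi}_{q_i}^l}(\xbf,0)$, $l \in V_i$, forces $\rho^{\varphi_i}(\xbf,0) > 0$ for every $i \in \mathcal{I}$. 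Assembling these with a final minimum over $i$ then yields $\rho^\phi(\xbf,0)>0$.

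First I would isolate the geometric content of the program \eqref{eq:dec}. The constraints $\zbf_l \in S_{q_i}^l$ together with $\xbf \in \mathcal{S}_i$ and $\zbf_l = E_l \xbf$ encode the set-containment property that any $\xbf$ whose team-components $\zbf_l = E_l \xbf$ all lie in the local level-sets $S_{q_i}^l$ must itself lie in the global level-set $\mathcal{S}_i$; equivalently, $h_{q_i}^l(\zbf_l) \geq 0$ for all $l \in V_i$ implies $h_i(\xbf) \geq 0$. I would take this containment as the workhorse, observing that strict positivity of each $h_{q_i}^l(\zbf_l)$ places $\zbf_l$ in the interior of $S_{q_i}^l$, which — using the continuity of $h_i$ and the explicit hypercube parameterization of the $h_{q_i}^l$ — upgrades the conclusion to the strict inequality $h_i(\xbf) > 0$.

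Then I split on the two cases. For $i \in \mathcal{I}_{\alw}$, every local formula $\bar{\varphi}_{q_i}^l$ inherits the same interval $[a_i,b_i]$ by \eqref{eq:interval}, so $\rho^{\bar{\varphi}_{q_i}^l}(\xbf,0) = \min_{t \in [a_i,b_i]} h_{q_i}^l(\zbf_l(t)) > 0$ gives $h_{q_i}^l(\zbf_l(t)) > 0$ for every $t \in [a_i,b_i]$ and every $l \in V_i$; applying the containment pointwise in $t$ yields $h_i(\xbf(t)) > 0$ on all of $[a_i,b_i]$, hence $\rho^{\varphi_i}(\xbf,0) = \min_{t \in [a_i,b_i]} h_i(\xbf(t)) > 0$. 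For $i \in \mathcal{I}_{\ev}$, the construction \eqref{eq:interval} collapses every local interval to the common singleton $[t_i,t_i]$, so $\rho^{\bar{\varphi}_{q_i}^l}(\xbf,0) = h_{q_i}^l(\zbf_l(t_i)) > 0$ for every $l \in V_i$ at the same instant $t_i$; the containment at $t = t_i$ then gives $h_i(\xbf(t_i)) > 0$, and since $t_i \in [a_i,b_i]$ we conclude $\rho^{\varphi_i}(\xbf,0) = \max_{t \in [a_i,b_i]} h_i(\xbf(t)) \geq h_i(\xbf(t_i)) > 0$.

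The main obstacle is the eventually case, and specifically why a single common satisfaction time $t_i$ is indispensable: were the teams allowed to satisfy their local eventually formulas at different instants, positivity of the local robustness values would only certify $\zbf_l(t_l) \in S_{q_i}^l$ at team-dependent times $t_l$, and the containment — a statement about one joint configuration $\xbf$ — could not be invoked to place $\xbf(t)$ in $\mathcal{S}_i$ at any single time. The synchronization enforced by \eqref{eq:interval} is exactly what repairs this. A secondary point requiring care is the passage from the non-strict containment (closed level-sets defined by $h \geq 0$) to the strict inequality needed for $\rho^{\varphi_i}(\xbf,0) > 0$, which I would handle by exploiting that strict local robustness keeps each $\zbf_l$ away from the boundary of $S_{q_i}^l$ together with the hypercube form of the local predicate functions.
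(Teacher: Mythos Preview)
Your approach mirrors the paper's proof: both unfold the conjunction semantics to reduce to per-$\varphi_i$ positivity, invoke the set-containment encoded by the feasibility of \eqref{eq:dec}, and handle the always and eventually cases exactly as you describe (pointwise on $[a_i,b_i]$ for $i\in\mathcal{I}_{\alw}$, and at the synchronized instant $t_i$ for $i\in\mathcal{I}_{\ev}$). One small slip: the hypercube parameterization you invoke for the strict-inequality upgrade belongs to Theorem~2 (equation \eqref{eq:prinf}), not Theorem~1, where $h_{q_i}^l$ is a general parametric family---the paper itself simply asserts $h_i(\xbf(t))>0$ from feasibility without elaborating on the strict versus non-strict passage.
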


\begin{proof}
For every $i\in \mathcal{I}$, \eqref{eq:dec} aims at maximizing the volume of the $S_{q_i}^l,l\in V_i$ which  underapproximates the projection set of $\mathcal{S}_i$ onto $\zset_l$. Since $\mathcal{S}_i\neq \emptyset$ for every $i\in \mathcal{I}$,  \eqref{eq:dec} is always feasible.
By definition of the robust semantics and the definition of the min operator it holds that:
\begin{equation*}
 \rho^{\phi_1\wedge \ldots\wedge \phi_v}(\xbf,0)\leq   \rho^{\phi_l}(\xbf,0), \; l=1,\ldots,v
\end{equation*}
As a result if there exists $\xbf:\mathbb{R}_{\geq 0} \rightarrow \xset$ such that $\rho^{\phi_1\wedge \ldots\wedge \phi_v}(\xbf,0)>0$ then $\rho^{\phi_l}(\xbf,0)> 0$ for every $l=1,\ldots,v$. By design, the satisfaction of $\phi_l$ depends on a subset of agents, thus $\rho^{\phi_l}(\xbf,0)=\rho^{\phi_l}(\zbf_l,0)> 0$ where $\zbf_l$ satisfies \eqref{eq:z2x}. Then, by the definition of the robust semantics for every $l=1,\ldots,v$ and $ q_i=1,\ldots,p_l$ it holds that:
\begin{equation*}
   0< \rho^{\phi_l}(\zbf_l,0)=\min_{q_i=1,\ldots,p_l}   \rho^{\bar{\varphi}_{q_i}^l}(\zbf_l,0)\leq   \rho^{\bar{\varphi}_{q_i}^l}(\zbf_l,0)
\end{equation*}
If $i\in \mathcal{I}_{\alw}$, then $\varphi_{q_i}^l$ is an always formula. Hence due to \eqref{eq:newformula}, $ \rho^{\bar{\varphi}_{q_i}^l}(\zbf_l,0)>0$ implies $h_{q_i}^l(\zbf_l(t))> 0$ for every $t\in [a_i,b_i]$, $l\in V_i$. Since $h_{q_i}^l(\zbf_l), l\in V_i$ is a feasible solution of \eqref{eq:dec}, it holds that $h_i(\xbf(t))>0, \forall t\in [a_i,b_i]$ where $\xbf(t)=A\zbf(t)$ and $\zbf(t)=\begin{bmatrix}\zbf_1^T(t) &\ldots & \zbf_v^T(t) \end{bmatrix}^T$. Hence, $\rho^{\varphi_i}(\xbf,0)>0$. If $i\in \mathcal{I}_{\ev}$, then due to \eqref{eq:newformula} and \eqref{eq:interval}-\eqref{eq:predicate}, for every $l\in V_i$ it holds that: $ \rho^{\bar{\varphi}_{q_i}^l}(\zbf_l,0)=h_{q_i}^l(\zbf_l(t_i))>0$. Following a similar argument as before, we can conclude that $h_i(\xbf(t_i))>0$ where $\xbf(t_i)=A\zbf(t_i)$. This implies that $\max_{t\in [a_i,b_i]}h_i(\xbf(t))\geq h_i(\xbf(t_i))>0$ leading to $\rho^{\varphi_i}(\xbf,0)>0$. Then, the result follows by the fact that $\rho^{\phi}(\xbf,0)=\min\big( \min_{i\in \mathcal{I}_{\alw}} \rho^{\varphi_i}(\xbf,0), \min_{i\in \mathcal{I}_{\ev}} \rho^{\varphi_i}(\xbf,0)\big)$.
\end{proof}

In problem \eqref{eq:dec} the goal is to maximize the volume of set $S_{q_i}^l$ by exhaustively evaluating $h_{q_i}^l(\zbf_l)$ over the continuous set $\zset_l$, which is in practice intractable. Another limitation of the proposed problem is often the lack of a known formula for computing the volume of a set, unless $h_{q_i}^l(\zbf_l)$ belongs to a specific class of functions such as the class of ellipsoids.

Aiming at reducing the computational complexity of the STL decomposition problem described above, we propose a convex formulation for designing the predicate functions corresponding to \eqref{eq:newformula} for every $l=1,\ldots,v$. The computational benefits of the proposed approach are related to the number of points in $\zset_l, l\in V_i$ that are considered for evaluation of the satisfaction of \eqref{eq:basiceq}. More specifically, contrary to \eqref{eq:dec}, in this approach only a finite number of points is evaluated that depends on the number of states $\xbf_k$ of the agents $k\in \mathcal{V}_l, l\in V_i$ involved in the satisfaction of $h_i(\xbf)$. Let $d_i^l\geq 1$ be the number of states in $\zbf_l, l\in V_i$ contributing to $h_i(\xbf)$. Since the global formula is a-priori given, the elements of $\zbf_l , l\in V_i$ on which the predicate function $h_i(\xbf)$ depends are known. Hence, we may write $h_i(\xbf),i \in \mathcal{I}$ as:
\begin{subequations}
\begin{align}
 h_i(\xbf)&=h_i(\mathbf{y}_{\alpha(1)},\ldots,\mathbf{y}_{\alpha(\vert V_i \vert)})  \label{eq:dependency2}\\
\mathbf{y}_{\alpha(c)}&=B_i^{\alpha(c)} \zbf_{\alpha(c)}, \quad c=1,\ldots, \vert V_i \vert    \label{eq:dependency}
\end{align}
\end{subequations}
where $\alpha: \{1,\ldots,\vert V_i \vert\} \rightarrow V_i$ is an injective function defined as $\alpha(c)=l$ and where $B_i^{\alpha(c)}\in M_{d_i^l\times n_l}(\{0,1\})$ is an appropriate selection matrix and $\zbf_{\alpha(c)} \in \zset_{\alpha(c)}$.

Based on the above, we can consider a special class of concave functions of the following form:
\begin{equation}
    h_{q_i}^l(\zbf_l)=r_{q_i}^l-\Vert B_i^l(\zbf_l-\mathbf{c}_{q_i}^l) \Vert_{\infty}, \quad q_i=1,\ldots,p_l \label{eq:prinf}
\end{equation}
where $r_{q_i}^l\in \mathbb{R}_{\geq 0}$, $\mathbf{c}_{q_i}^l\in \zset_l$ and $B_i^l\in M_{d_i^l\times n_l}(\{0,1\})$ is the same selection matrix considered in \eqref{eq:dependency} with $\alpha(c)=l$.
Let $J_{q_i}^l\subseteq \{1,\ldots,n_l\}$ denote the set of indices of the columns of  $B_i^l$ with non-zero entries. Given the predicate functions defined by \eqref{eq:prinf}, it follows that:
\begin{equation}
   h_{q_i}^l(\zbf_l)\geq 0 \Leftrightarrow  \quad \zbf_l(\eta) \in [-r_{q_i}^l+\mathbf{c}_{q_i}^l(\eta),r_{q_i}^l+\mathbf{c}_{q_i}^l(\eta)]
\end{equation}
for every $\eta \in J_{q_i}^l$ where $\zbf_l(\eta),\mathbf{c}_{q_i}^l(\eta)$ denote the $\eta$-th element of the vectors $\zbf_l,\mathbf{c}_{q_i}^l$ respectively. For every $i\in\mathcal{I}$ and $l\in\{1,\ldots,v\}$ consider the following set of vectors:
\begin{equation}
\begin{split}
  \mathcal{P}_i^l=\big\{\bm{\xi}\in \zset_l: \bm{\xi}(\eta)&=-r_{q_i}^l+\mathbf{c}_{q_i}^l(\eta) \; \text{or} \\ \bm{\xi}(\eta)&=r_{q_i}^l+\mathbf{c}_{q_i}^l(\eta), \eta \in J_{q_i}^l \big\} \label{eq:vertexset}
  \end{split}
\end{equation}
where $\bm{\xi}(\eta)$ denotes the $\eta$-th element of $\bm{\xi}$. If $r_{q_i}^l\geq 0$, the set $B_i^l\mathcal{P}_i^l$ consists of the vertices of a hypercube in $\mathbb{R}^{d_i^l}$ of edge length $r_{q_i}^l$ and center $\mathbf{c}_{q_i}^l$.
Hence, its cardinality will be equal to $2^{d_i^l}$. To guarantee the convexity of the proposed problem we pose the following assumption:

\begin{assumption}
For every $i\in \mathcal{I}$ the predicate function $h_i(\xbf)$ is concave in $\xset$.
\end{assumption}

\begin{theorem}
Consider the global STL formula $\phi$ defined by \eqref{eq:f1}-\eqref{eq:f3} and the predicate functions $h_i(\xbf), i\in \mathcal{I}$ associated to $\varphi_i$. Let Assumption 1 hold. For every $i\in \mathcal{I}$ assume that $\mathcal{S}_i\neq \emptyset$, where $\mathcal{S}_i$ is defined in \eqref{eq:levelset}. Consider the functions $ h_{q_i}^l(\zbf_l), \; q_i=1,\ldots,p_l, \; l=1,\ldots,v$ defined by \eqref{eq:prinf} where $\mathbf{c}_{q_i}^l,r_{q_i}^l$ are parameters found as the solution to the following optimization problem:
\begin{subequations}\label{eq:convex}
\begin{align}
    \max_{\mathbf{c}_{q_i}^l,r_{q_i}^l} \sum_{l\in V_i} r_{q_i}^l  \tag{\ref{eq:convex}}
    \end{align}
subject to:
\begin{align}
  h_i(\mathbf{y}_{\alpha(1)},\ldots,\mathbf{y}_{\alpha(\vert V_i \vert)}) &\geq 0 \\
 \mathbf{y}_{\alpha(c)}&\in B_i^{\alpha(c)}\mathcal{P}_i^{\alpha(c)}, \quad c=1,\ldots,\vert V_i \vert
\end{align}
\end{subequations}
where $\alpha: \{1,\ldots,\vert V_i \vert\} \rightarrow V_i$  and $B_i^{\alpha(c)}\in M_{d_i^l\times n_l}(\{0,1\})$ is the injective function and selection matrix respectively considered in \eqref{eq:dependency2}-\eqref{eq:dependency} and $\mathcal{P}_i^{\alpha(c)}$ is the set defined by \eqref{eq:vertexset} for every $\alpha(c)=l\in V_i$.
For every $l=1,\ldots,v$ define the formulas $\bar{\varphi}_{q_i}^l$ based on \eqref{eq:newformula} and \eqref{eq:interval}-\eqref{eq:predicate}
and consider the decomposed STL formulas $\phi_l=\bigwedge_{q_i=1}^{p_l} \bar{\varphi}_{q_i}^l$, $l=1,\ldots,v$. If there exists $\xbf:\mathbb{R}_{\geq 0} \rightarrow \xset$ such that $\rho^{\phi_1\wedge \ldots\wedge \phi_v}(\xbf,0)>0$, then $\rho^\phi(\xbf,0)>0$.
\end{theorem}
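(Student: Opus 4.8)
The plan is to keep the overall skeleton of the proof of Theorem~1 and to replace its single appeal to feasibility of \eqref{eq:dec} by a geometric argument that uses Assumption~1 together with the hypercube form \eqref{eq:prinf}. First I would record that \eqref{eq:convex} is a convex program: each vertex in $B_i^{\alpha(c)}\mathcal{P}_i^{\alpha(c)}$ has coordinates $\mathbf{c}_{q_i}^l(\eta)\pm r_{q_i}^l$, hence is affine in the decision variables $(\mathbf{c}_{q_i}^l,r_{q_i}^l)$, and composing the concave $h_i$ with these affine maps turns each of the finitely many constraints $h_i(\mathbf{y}_{\alpha(1)},\ldots,\mathbf{y}_{\alpha(\vert V_i\vert)})\geq 0$ into a convex one. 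Feasibility follows from $\mathcal{S}_i\neq\emptyset$: taking any $\xbf^\star\in\mathcal{S}_i$ and setting $r_{q_i}^l=0$, $\mathbf{c}_{q_i}^l=E_l\xbf^\star$ collapses every hypercube to a single point and leaves only the constraint $h_i(\xbf^\star)\geq 0$. From here I would argue exactly as in Theorem~1: $\rho^{\phi_1\wedge\ldots\wedge\phi_v}(\xbf,0)>0$ and the min-form of conjunction give $\rho^{\phi_l}(\xbf,0)=\rho^{\phi_l}(\zbf_l,0)>0$, whence $\rho^{\bar\varphi_{q_i}^l}(\zbf_l,0)>0$ for all $l\in V_i$; by \eqref{eq:newformula}--\eqref{eq:predicate} this means $h_{q_i}^l(\zbf_l(t))>0$ for every $t\in[a_i,b_i]$ when $i\in\mathcal{I}_{\alw}$, and $h_{q_i}^l(\zbf_l(t_i))>0$ when $i\in\mathcal{I}_{\ev}$.

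The heart of the proof is to turn these local inequalities into a statement about $h_i$. By the equivalence displayed just after \eqref{eq:prinf}, $h_{q_i}^l(\zbf_l)\geq 0$ holds iff the selected coordinates $B_i^l\zbf_l$ lie in the hypercube of edge $r_{q_i}^l$ centered at $B_i^l\mathbf{c}_{q_i}^l$, and strict positivity places them in its interior. I would then form the box $H_i=\prod_{l\in V_i}B_i^l S_{q_i}^l$ and observe that its vertices are exactly the tuples $(\mathbf{y}_{\alpha(1)},\ldots,\mathbf{y}_{\alpha(\vert V_i\vert)})$ with $\mathbf{y}_{\alpha(c)}\in B_i^{\alpha(c)}\mathcal{P}_i^{\alpha(c)}$, that is, precisely the points at which \eqref{eq:convex} enforces $h_i\geq 0$. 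Since $h_i$ depends only on these coordinates by \eqref{eq:dependency2} and is concave by Assumption~1, Jensen's inequality yields, for any $\mathbf{p}=\sum_v\lambda_v\mathbf{v}\in H_i$, that $h_i(\mathbf{p})\geq\sum_v\lambda_v h_i(\mathbf{v})\geq\min_v h_i(\mathbf{v})\geq 0$; that is, a concave function nonnegative at all vertices of a box is nonnegative throughout the box. Applying this to $\mathbf{p}=(B_i^{\alpha(1)}\zbf_{\alpha(1)}(t),\ldots)\in H_i$ gives $h_i(\xbf(t))\geq 0$ at the relevant times, where $\xbf(t)=A\zbf(t)$.

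It remains to conclude and to recover strictness. For $i\in\mathcal{I}_{\alw}$ the box membership gives $\rho^{\varphi_i}(\xbf,0)=\min_{t\in[a_i,b_i]}h_i(\xbf(t))\geq 0$, and for $i\in\mathcal{I}_{\ev}$ it gives $\rho^{\varphi_i}(\xbf,0)=\max_{t\in[a_i,b_i]}h_i(\xbf(t))\geq h_i(\xbf(t_i))\geq 0$; combining through $\rho^\phi(\xbf,0)=\min_i\rho^{\varphi_i}(\xbf,0)$ closes the argument exactly as in Theorem~1. The strict inequality $\rho^\phi(\xbf,0)>0$ I would recover, as Theorem~1 does, from the fact that the local strict satisfaction places $\mathbf{p}$ in the \emph{open} box $H_i$: a concave $h_i$ nonnegative on $H_i$ is strictly positive at interior points away from the degenerate case in which it is flat across $H_i$. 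The step I expect to be the crux is the passage from nonnegativity at the $\prod_{l\in V_i}2^{d_i^l}$ sampled vertices to nonnegativity on the entire continuous box $H_i$: this is precisely where Assumption~1 and the infinity-norm parameterization \eqref{eq:prinf} are indispensable, since absent concavity a finite set of vertex evaluations says nothing about the interior.
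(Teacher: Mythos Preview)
Your argument is correct and matches the paper's route: both establish that feasibility of \eqref{eq:convex} at the finite vertex set forces $h_i\geq 0$ on the whole product box (the paper calls it $W$, you call it $H_i$) via concavity, and then defer to the proof of Theorem~1. The only cosmetic difference is that the paper phrases the convex-combination step through Carath\'eodory's theorem rather than a direct Jensen argument, and your remark about recovering strictness at interior points is actually more than the paper supplies---it simply writes $h_i\geq 0$ on $W$ and then inherits the strict conclusion from Theorem~1 without further comment.
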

\begin{proof}
For every $i\in \mathcal{I}$, \eqref{eq:convex} finds the maximum volume sets $B_i^lS_{q_i}^l,l\in V_i$ which are underapproximations of the projection sets of $\mathcal{S}_i$ onto $B_i^l\zset_l$. Since $\mathcal{S}_i\neq \emptyset$ for every $i\in \mathcal{I}$,  \eqref{eq:convex} is always feasible. To simplify notation let the sets $W=\big\{\mathbf{y}: \mathbf{y}_{l}=B_i^{l}\zbf_{l}, \; \zbf_l \in S_{q_i}^l, \; l\in V_i\big\}$ and $W^\prime=\big\{\mathbf{y}:      \mathbf{y}_{l}=B_i^{l}\bm{\xi}_l, \; \bm{\xi}_l\in \mathcal{P}_i^l, \; l\in V_i\big\}$ where $\mathbf{y}=\begin{bmatrix} \mathbf{y}_{\alpha(1)}^T & \ldots & \mathbf{y}_{\alpha(\vert V_i \vert)}^T \end{bmatrix}^T$. The sets $B_i^lS_{q_i}^l$ are convex since they are projection sets of the zero-level sets of the concave function $h_{q_i}^l(\zbf_l)$ defined by \eqref{eq:prinf}. Hence, $W$ is convex as the Castesian product of convex sets. By Caratheodory's theorem \cite[Th. 17.1]{rockafellar} every point $\mathbf{y}\in W$ can be written as a convex combination of $d_i+1$ points where $d_i=dim(W)=\sum_{l\in V_i} d_i^l$. Observe that $W^\prime \subset W$ with $\vert W^\prime \vert= 2^{d_i}> d_i$. Applying Caratheodory's theorem, we write any point $\mathbf{y}\in W$ as a convex combination of the form: $\mathbf{y}=\sum_{j=1}^{d_i+1} \lambda_j \mathbf{y}_j^\prime$ where $\mathbf{y}_j^\prime \in W^\prime, \; \lambda_j\geq 0 $ and  $ \sum_{j=1}^{d_i+1}\lambda_j=1$. By feasibility of \eqref{eq:convex} and due to Assumption 1 we can conclude that $h_i(\mathbf{y}_{\alpha(1)},\ldots,\mathbf{y}_{\alpha(\vert V_i \vert)})\geq 0$ for any $\mathbf{y}\in W$ with $\mathbf{y}=\begin{bmatrix} \mathbf{y}_{\alpha(1)}^T & \ldots & \mathbf{y}_{\alpha(\vert V_i \vert)}^T \end{bmatrix}^T$. The rest of the proof is similar to that of Theorem 1.
\end{proof}

For $i\in \mathcal{I}_{\ev}$ the new STL tasks, defined by \eqref{eq:newformula}, are expected to be satisfied at a specific time instant $t_i\in [a_i,b_i]$ which is considered a designer's choice. However, in many cases pre-determining the time instant of satisfaction of a formula may lead to conservatism and reduced performance. An alternative would be to allow satisfaction of the local formulas over time intervals $[a_q^l,b_q^l]\subseteq [a_i,b_i]$. Then, in order to guarantee the satisfaction of the global formula we can define the local tasks corresponding to  $\varphi_i, i\in \mathcal{I}_{\ev}$ as STL tasks of the form $\alw_{[a_{q_i}^l,b_{q_i}^l]} \mu$.
This is depicted in the following Proposition:
\begin{figure*}[!t]
     \centering
     \begin{subfigure}[b]{0.32\textwidth}
         \centering
       \includegraphics[width=\textwidth]{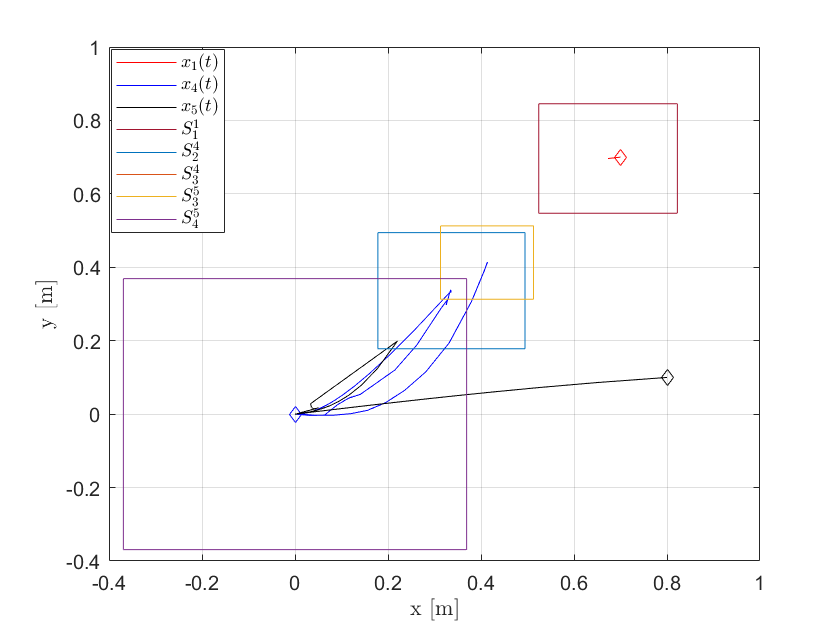}
         \caption{Trajectories of agents 1,4,5}
         \label{fig:ev1}
     \end{subfigure}
     \hfill
     \begin{subfigure}[b]{0.32\textwidth}
         \centering
         \includegraphics[width=\textwidth]{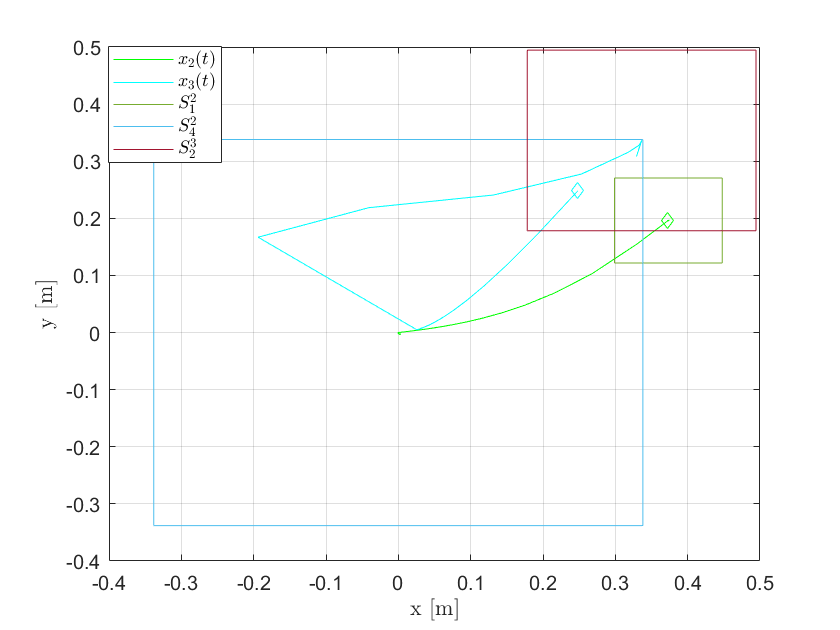}
         \caption{Trajectories of agents 2,3}
         \label{fig:ev2}
     \end{subfigure}
     \hfill
     \begin{subfigure}[b]{0.32\textwidth}
         \centering
         \includegraphics[width=\textwidth]{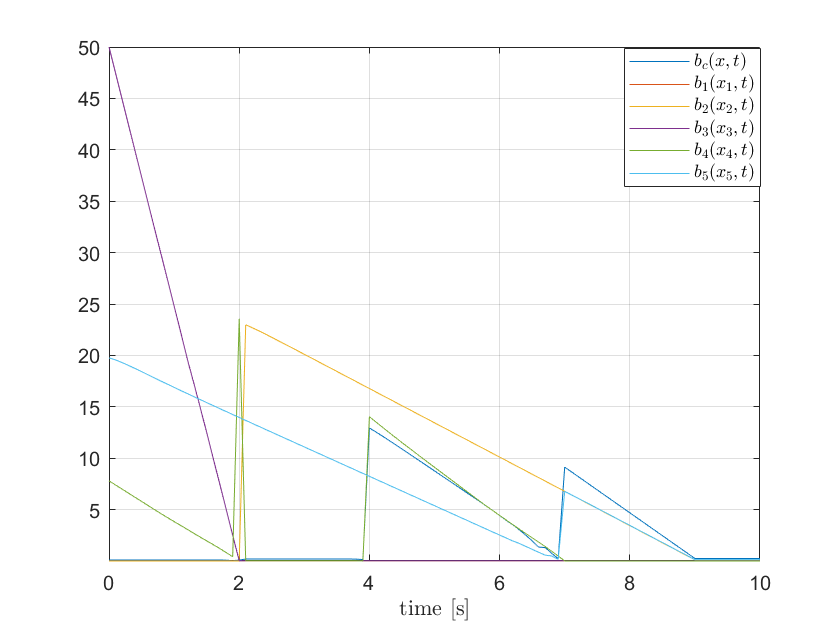}
         \caption{Barrier Function Evolution}
         \label{fig:ev3}
     \end{subfigure}
        \caption{Agents' Trajectories under the local STL tasks defined based on \eqref{eq:newformula}, \eqref{eq:interval}-\eqref{eq:predicate} and Barrier Function Evolution}
        \label{fig:evall}
\end{figure*}
\begin{proposition}
Consider the global STL formula $\phi$ defined by \eqref{eq:f1}-\eqref{eq:f3}. Let Assumption 1 hold. For every $i\in \mathcal{I}$ assume that $\mathcal{S}_i\neq \emptyset$, where $\mathcal{S}_i$ is defined by \eqref{eq:levelset}. For every $i\in \mathcal{I}$ consider the functions $h_{q_i}^l(\zbf_l), l\in V_i$ defined by \eqref{eq:prinf} with their parameters found as solutions  to \eqref{eq:convex}. Let the STL formula $\bar{\varphi}_{q_i}^l$ be defined as: 
\begin{equation}
    \bar{\varphi}_{q_i}^l= \alw_{[a_{q_i}^l,b_{q_i}^l]} \bar{\mu}_{q_i}^l \label{eq:always}
\end{equation}
where
\begin{equation}
    [a_{q_i}^l,b_{q_i}^l]\begin{cases}\subseteq[a_i,b_i], \quad i \in \mathcal{I}_{\ev}\\= [a_i,b_i],\quad i\in \mathcal{I}_{\alw} \end{cases} \label{eq:alwaysint}
\end{equation}
and $\bar{\mu}_{q_i}^l$ is a predicate defined by \eqref{eq:predicate}. Let $\phi_l=\bigwedge_{q_i=1}^{p_l} \bar{\varphi}_{q_i}^l$, $l=1,\ldots,v$. If there exists $\xbf:\mathbb{R}_{\geq 0} \rightarrow \xset$ such that $\rho^{\phi_1\wedge \ldots\wedge \phi_v}(\xbf,0)>0$, then $\rho^\phi(\xbf,0)>0$.
\end{proposition}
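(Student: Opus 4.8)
The plan is to mirror the proof of Theorem 2 almost verbatim, reusing its convex-feasibility argument unchanged and modifying only the treatment of the eventually formulas. The one fact I would import from Theorem 2 is its central consequence of \eqref{eq:convex}: by feasibility together with Assumption 1 and the Carath\'eodory argument, for every $i\in\mathcal{I}$ the simultaneous satisfaction of all local predicates, $h_{q_i}^l(\zbf_l)\geq 0$ for every $l\in V_i$, forces the global predicate to obey $h_i(\xbf)\geq 0$ with $\xbf=A\zbf$. Since this step does not depend on whether the local formulas are eventually or always formulas, I would simply cite it rather than reprove it.

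Next I would peel off the conjunctions exactly as in Theorem 1. From $\rho^{\phi_1\wedge\ldots\wedge\phi_v}(\xbf,0)>0$ the min-based robust semantics give $\rho^{\phi_l}(\xbf,0)>0$ for each $l$; because each $\phi_l$ depends only on $\zbf_l$ this reads $\rho^{\phi_l}(\zbf_l,0)>0$; and a second application of the min semantics yields $\rho^{\bar\varphi_{q_i}^l}(\zbf_l,0)>0$ for every constituent. The key structural difference here is that, by \eqref{eq:always}, \emph{every} $\bar\varphi_{q_i}^l$ is an always formula, including those arising from $i\in\mathcal{I}_{\ev}$. Hence in both cases I obtain $h_{q_i}^l(\zbf_l(t))>0$ for all $t\in[a_{q_i}^l,b_{q_i}^l]$ and all $l\in V_i$.

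For $i\in\mathcal{I}_{\alw}$ the intervals equal $[a_i,b_i]$ by \eqref{eq:alwaysint}, so the local predicates hold on all of $[a_i,b_i]$ for every $l\in V_i$; applying the imported feasibility consequence pointwise in $t$ gives $h_i(\xbf(t))>0$ on $[a_i,b_i]$, whence $\rho^{\varphi_i}(\xbf,0)>0$, identical to Theorem 2. For $i\in\mathcal{I}_{\ev}$ the argument diverges: I would pick a common time $t^\ast\in\bigcap_{l\in V_i}[a_{q_i}^l,b_{q_i}^l]$, observe that $h_{q_i}^l(\zbf_l(t^\ast))>0$ holds simultaneously for all $l\in V_i$, and apply the feasibility consequence at that single instant to get $h_i(\xbf(t^\ast))>0$. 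Since $t^\ast\in[a_i,b_i]$, this gives $\max_{t\in[a_i,b_i]}h_i(\xbf(t))\geq h_i(\xbf(t^\ast))>0$, i.e. $\rho^{\varphi_i}(\xbf,0)>0$. Conjoining over $\mathcal{I}_{\alw}\cup\mathcal{I}_{\ev}$ through the min semantics then delivers $\rho^{\phi}(\xbf,0)>0$.

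The main obstacle is exactly the existence of that common instant $t^\ast$. Unlike Theorem 2, where the eventually sub-tasks are all anchored to the single designer-chosen time $t_i$ and therefore trivially agree, here each sub-team may satisfy its always formula on a \emph{different} sub-interval of $[a_i,b_i]$, yet the feasibility consequence only certifies $h_i\geq 0$ when all local predicates hold at the \emph{same} time. I therefore expect the proof to require, and would state explicitly, that the local intervals overlap, $\bigcap_{l\in V_i}[a_{q_i}^l,b_{q_i}^l]\neq\emptyset$; without this condition the decomposition of an eventually task spanning several sub-teams can fail, since each sub-team's predicate could hold only during a window disjoint from the others and no common satisfaction time of $\varphi_i$ would be guaranteed.
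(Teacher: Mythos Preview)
Your approach is essentially identical to the paper's: cite the Carath\'eodory/feasibility consequence from Theorem~2, peel off the conjunctions via the min semantics, use the always semantics of \eqref{eq:always} to obtain $h_{q_i}^l(\zbf_l(t))>0$ on $[a_{q_i}^l,b_{q_i}^l]$ for each $l\in V_i$, and then invoke feasibility pointwise. You have actually been more careful than the paper on the one delicate point. The paper's proof writes ``$h_i(\xbf(t))>0$ \ldots\ for every $t\in[a_{q_i}^l,b_{q_i}^l]$'' as though this were a single interval, silently assuming that the local windows coincide (which is indeed how the simulations are set up). Your explicit overlap condition $\bigcap_{l\in V_i}[a_{q_i}^l,b_{q_i}^l]\neq\emptyset$ is exactly the missing hypothesis needed to apply the feasibility consequence at a common instant; without it, as you note, the argument for $i\in\mathcal{I}_{\ev}$ with $|V_i|>1$ does not go through.
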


\begin{proof}
For $i\in \mathcal{I}_{\alw}$ the proof follows similar arguments to Theorem 1. For $i\in \mathcal{I}_{\ev}$, if  $\rho^{\phi_1\wedge \ldots\wedge \phi_v}(\xbf,0)>0$ then, by \eqref{eq:always}-\eqref{eq:alwaysint} and the definition of the robust semantics,  $\rho^{\bar{\varphi}_{q_i}^l}(\zbf_l,0)>0$ implies $ h_{q_i}^l(\zbf_l(t))>0$ for every $ t\in [a_{q_i}^l,b_{q_i}^l]$ and $l\in V_i$. Since $ h_{q_i}^l(\zbf_l), l \in V_i$ are feasible solutions of \eqref{eq:convex} we may conclude that $h_i(\xbf(t))>0$, where $\xbf(t)=A\zbf(t)$ and $\zbf(t)=\begin{bmatrix}\zbf_1^T(t) &\ldots & \zbf_v^T(t) \end{bmatrix}^T$, for every $t\in [a_{q_i}^l,b_{q_i}^l]\subseteq [a_i,b_i]$. Hence, $\rho^{\varphi_i}(\xbf,0)=\max_{t\in [a_i,b_i]} h_i(\xbf(t))\geq \max_{t\in [a_{q_i}^l,b_{q_i}^l]} h_i(\xbf(t))>0$. The rest of the proof is similar to that of Theorem 1.
\end{proof}

\begin{figure*}[!t]
     \centering
     \begin{subfigure}[b]{0.32\textwidth}
         \centering
       \includegraphics[width=\textwidth]{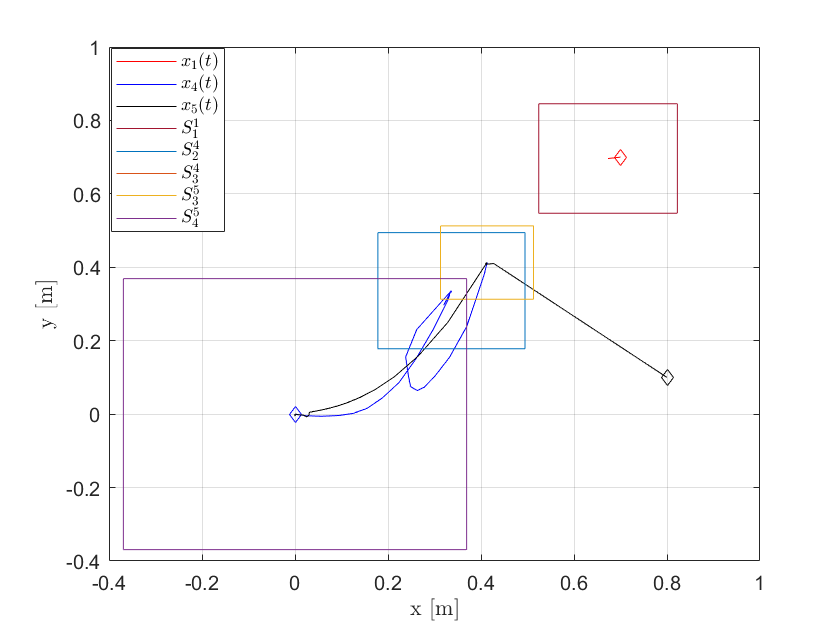}
         \caption{Trajectories of agents 1,4,5}
         \label{fig:alw1}
     \end{subfigure}
     \hfill
     \begin{subfigure}[b]{0.32\textwidth}
         \centering
         \includegraphics[width=\textwidth]{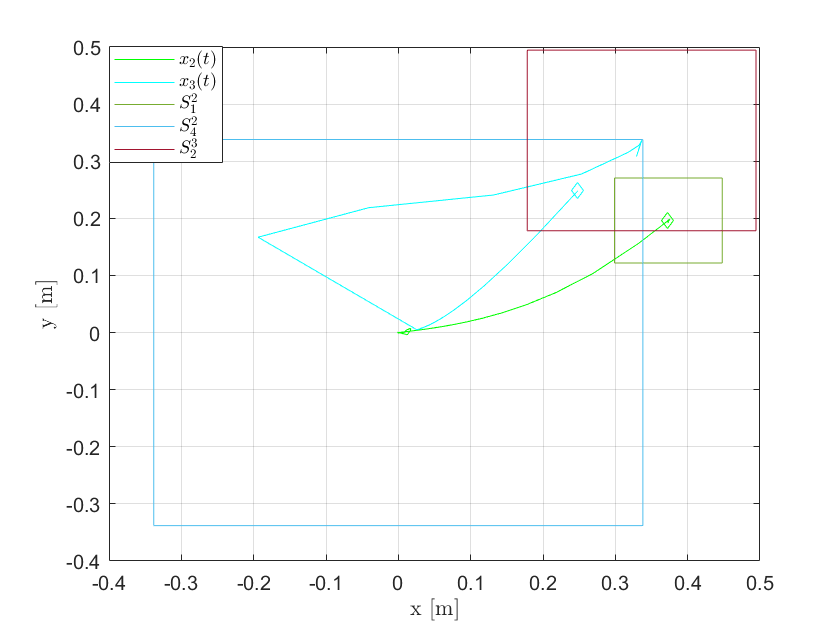}
         \caption{Trajectories of agents 2,3}
         \label{fig:alw2}
     \end{subfigure}
     \hfill
     \begin{subfigure}[b]{0.32\textwidth}
         \centering
         \includegraphics[width=\textwidth]{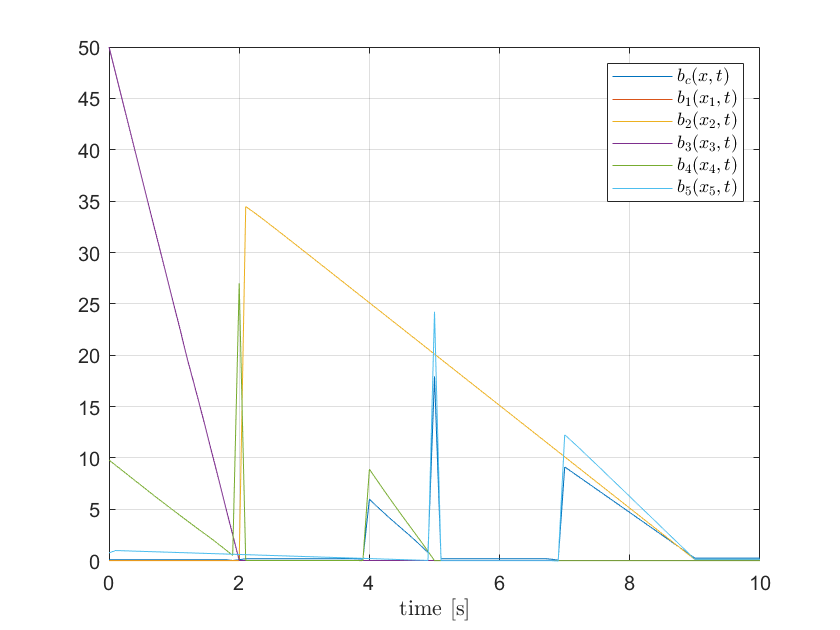}
         \caption{Barrier Function Evolution}
         \label{fig:alw3}
     \end{subfigure}
        \caption{Agents' Trajectories under the local STL tasks defined based on \eqref{eq:predicate}, \eqref{eq:always} and \eqref{eq:alwaysint} and Barrier Function Evolution}
        \label{fig:alwall}
\end{figure*}
\section{Simulations}
Consider a team of $R=5$ agents. Without loss of generality the team is decomposed in 5 sub-teams: $\mathcal{V}_k=\{k\}, k\in \mathcal{V}$. The agents' states $\xbf_k, k\in \mathcal{V}$ evolve over time based on the following equation:
\begin{equation*}
  \dot{\xbf}_k=A_k\xbf_k+\ubf_k , \; k=1,\ldots,5
\end{equation*}
where $A_k=\begin{bmatrix}-0.5 &0\\1 & -1 \end{bmatrix}$ for every $ k\in \{1,2,5\}$ and $A_k=\begin{bmatrix}-1 &-1\\0 & -3 \end{bmatrix}$ for $k\in \{3,4\}$. The states and inputs of the agents are subject to constraints, i.e., $\xbf_k \in \xset$, $\ubf_k \in \uset$ where $\xset=\{\xbf\in \mathbb{R}^2: \Vert \xbf \Vert_2 \leq d_x\}$, $\uset=\{\ubf\in \mathbb{R}^2: \Vert \ubf \Vert_2 \leq d_u\}$, $d_x=1$ and $d_u=5$. Consider the global STL formula $\phi=\bigwedge_{i=1}^4 \varphi_i$ where $\varphi_i,i\in \mathcal{I}$ are defined as: $ \varphi_1=\alw_{[0,2.1]}( \Vert \xbf_1-\xbf_2-p_x \Vert_2^2 \leq 0.1)$, $\varphi_2= \alw_{[2,4]}( \Vert \xbf_3-\xbf_4 \Vert_{2}^2 \leq 0.2)$, $  \varphi_3= \ev_{[3,7]}(\Vert \xbf_5-\xbf_4\Vert_{P_1}^2 \leq 0.2)$ and $\varphi_4= \ev_{[8,10]}(\Vert \xbf_5-\xbf_2\Vert_{P_2}^2 \leq 0.25)$, where $p_x=\begin{bmatrix} 0.3 & 0.5\end{bmatrix}^T$ and $P_1=\text{diag}(4,1), \; P_2=\text{diag}(0.1,0.4)$ are positive definite weight matrices. Since the predicate functions corresponding to $\varphi_i, i\in \mathcal{I}$ are quadratic, the proposed problem \eqref{eq:convex} becomes a Quadratically Constrained Quadratic Program (QCQP) and is efficiently solved using 
the \textit{Opti Toolbox} \cite{opti}. The average computational time of the QCQPs is 0.052sec on an Intel Core i7-8665U with 16GB RAM using MATLAB. 

To verify the validity of Theorem 2 and Proposition 1 we design agents' trajectories using the MPC scheme  proposed in \cite{ecc} with a sampling frequency of 10 Hz and optimization horizon length $N=1$. Each agent $k$ solves a local MPC problem without communicating with its peers since the satisfaction of the assigned tasks depends only on its own behavior. Here, a single, time-varying barrier $b_k(\xbf_k,t), k\in \mathcal{V}$ is considered and designed offline encoding the local STL task  specifications $\phi_k$ corresponding to $\mathcal{V}_k$. For every subtask of $\phi_k$ a temporal behavior is designed for agent $k$ such that the satisfaction of $\phi_k$ with a robustness value $r=0.005$ is guaranteed when $b_k(\xbf_k,t)\geq 0$ is true for every $t\in [0,10]$. For details on the design of the barrier function $b_k(\xbf_k,t)$ see \cite{lars_linear,ecc}. The local STL task $\phi_k$ assigned to each agent $k$ is defined by \eqref{eq:newformula}, \eqref{eq:interval}-\eqref{eq:predicate} as follows:
\begin{align*}
    \phi_1&=\alw_{[0,2.1]}\;\bar{\mu}_1^1=\bar{\varphi}_1^1\\
    \phi_2&= (\alw_{[0,2.1]}\;\bar{\mu}_1^2)\wedge(\ev_{[9,9]}\; \bar{\mu}_4^2)=\bar{\varphi}_1^2 \wedge \bar{\varphi}_4^2\\
    \phi_3&= \alw_{[2,4]}\;\bar{\mu}_2^3=\bar{\varphi}_2^3 \\
    \phi_4&= (\alw_{[2,4]}\;\bar{\mu}_2^4)\wedge(\ev_{[7,7]}\; \bar{\mu}_3^4)=\bar{\varphi}_2^4 \wedge \bar{\varphi}_3^4\\
    \phi_5&= (\ev_{[7,7]}\;\bar{\mu}_3^5)\wedge(\ev_{[9,9]}\; \bar{\mu}_4^5)=\bar{\varphi}_3^5 \wedge \bar{\varphi}_4^5
\end{align*}
In Figure \eqref{fig:ev1}, \eqref{fig:ev2} the agents' trajectories and the zero level sets $S_q^k$ of the predicate functions $h_q^k(\xbf_k)$ are shown when the parameters $c_q^k,r_q^k$ are found as solutions to \eqref{eq:convex}. Since the agents move on $\mathbb{R}^2$ and $r_q^k\neq 0$ for every $q$ and $k$ the zero level sets define square areas with edge length $r_q^k$. In Figure \eqref{fig:ev3} the evolution of the local barrier functions $b_k(\xbf_k,t)$ is shown. Since $\min_k \inf_{t\in [0,10]} b_k(\xbf_k,t)\geq 5.38\cdot 10^{-4}$ is true, we can conclude that $\rho^{\phi_k}(\xbf_k,0)\geq  0.005$ for every $k\in \mathcal{V}$. To validate Theorem 1 and given the trajectories of the agents found by the local MPC controllers we aim at designing a barrier function $b_c(\xbf,t)$ encoding the global specifications described by $\phi$ and evaluating its value over the interval $[0,10]$. If $b_c(\xbf,t)\geq 0$ is true for every $t\in [0,10]$, then the global formula $\phi$ is satisfied. From Figure \eqref{fig:ev3} we have that $\inf_{t\in [0,10]} b_c(\xbf,t)\geq 0.0234$. Hence, $\xbf \models \phi$.

Next, we consider the alternative definition of the local tasks as described in Proposition 1. Observe that the local tasks $\phi_1, \phi_3$ remain the same. The new local tasks $\phi_2, \phi_4, \phi_5$ are defined as: $\phi_2=\bar{\varphi}_1^2 \wedge\bar{\varphi}_4^2$, $\phi_4=\bar{\varphi}_2^4 \wedge \bar{\varphi}_3^4$ and $\phi_5=\bar{\varphi}_3^5 \wedge \bar{\varphi}_4^5 $, where $\bar{\varphi}_1^2=\alw_{[0,2.1]}\;\bar{\mu}_1^2$, $\bar{\varphi}_4^2=\alw_{[9,10]}\; \bar{\mu}_4^2$, $\bar{\varphi}_2^4=\alw_{[2,4]}\;\bar{\mu}_2^4$, $\bar{\varphi}_3^4=\alw_{[5,7]}\; \bar{\mu}_3^4$, $\bar{\varphi}_3^5=\alw_{[5,7]}\;\bar{\mu}_3^5$ and $\bar{\varphi}_4^5=\alw_{[9,10]}\; \bar{\mu}_4^5$. In Figure \eqref{fig:alw1} and \eqref{fig:alw2} the agents' trajectories are shown. Following a similar procedure as before, we design a set of local barrier functions $b_k(\xbf_k,t)$ and a function $b_c(\xbf,t)$ with robustness $r=0.005$. Based on Figure \eqref{fig:alw3}, $\min_k \inf_{t\in [0,10]} b_k(\xbf_k,t)\geq 5.17\cdot 10^{-4}$ implying $\xbf_k \models \phi_k, k \in \mathcal{V}$. Additionally, it holds that $\inf_{t\in [0,10]} b_c(\xbf,t)\geq 0.0234$. Hence, $\rho^{\phi}(\xbf,0)\geq 0.005$.

\section{Conclusions}
In this work a global STL formula is decomposed to a set of local STL tasks whose satisfaction depends on an a-priori chosen subset of agents. The predicate functions of the new formulas are chosen as functions of the infinity norm of the agents' states. A convex optimization problem is, then, designed for  optimizing  their parameters towards increasing the volume of their zero level-sets. Two alternatives are proposed for defining the local STL tasks in both of which the interval of satisfaction corresponding to the eventually formulas is considered a designer's choice. Future work will consider a more sophisticated framework for choosing the interval of satisfaction of the formulas aiming at increasing the total robustness of the task.





\printbibliography

\end{document}